\documentclass[conference]{IEEEtran}
% \pdfoutput=1 % use when submitting to arxiv if file contains png figures
\usepackage{cite}
\usepackage{amsmath,amssymb,amsfonts,amsthm}
\usepackage{algorithmic}
\usepackage{graphicx}
\usepackage{textcomp}
\usepackage{xcolor}
\usepackage{comment}
\usepackage{float}

% following lines added by Deniz Sargun
\usepackage[caption=false,font=footnotesize]{subfig}
\DeclareMathOperator*{\esssup}{esssup}
\newtheorem{theorem}{Theorem}
\newtheorem{lemma}[theorem]{Lemma}
\newtheorem{corollary}[theorem]{Corollary}
\newtheorem*{theorem*}{Theorem}
\newtheorem*{lemma*}{Lemma}
\newtheorem*{corollary*}{Corollary}
\newenvironment{sketch of proof}{\proof}{\endproof}
\DeclareMathOperator*{\argmin}{argmin}
\usepackage{physics}
\usepackage{MnSymbol}
\usepackage[bb=boondox]{mathalfa}
\DeclareMathOperator*{\argmax}{arg\,max}

%\titlespacing*{\section}{0pt}{0.2\baselineskip}{0.2\baselineskip}
%\titlespacing*{\subsection}{0pt}{0.2\baselineskip}{0.2\baselineskip}
%\setlength{\textfloatsep}{1pt} % Change text-float separation
%\setlength{\floatsep}{1pt} % Change float-float separation
% \setlength{\skip\footins}{0.2cm} % Change footing space
%\setlength{\dbltextfloatsep}{1pt}% Change text-2-column-float separation
% \setlength{\abovecaptionskip}{1pt}
% \setlength{\belowcaptionskip}
%

\def\BibTeX{{\rm B\kern-.05em{\sc i\kern-.025em b}\kern-.08em
    T\kern-.1667em\lower.7ex\hbox{E}\kern-.125emX}}
\begin{document}

\title{Quickest Detection over Sensor Networks with Unknown Post-Change Distribution}

\author{\IEEEauthorblockN{Deniz Sargun, C. Emre Koksal}
\IEEEauthorblockA{\textit{Department of Electrical and Computer Engineering}\\
\textit{The Ohio State University}\\
Columbus, OH, USA\\
\{sargun.1, koksal.2\}@osu.edu}
}

\maketitle

\begin{abstract}
We propose a quickest change detection problem over sensor networks where both the subset of sensors undergoing a change and the local post-change distributions are unknown. Each sensor in the network observes a local discrete time random process over a finite alphabet. Initially, the observations are independent and identically distributed (i.i.d.) with known pre-change distributions independent from other sensors. At a fixed but unknown change point, a fixed but unknown subset of the sensors undergo a change and start observing samples from an unknown distribution. We assume the change can be quantified using concave (or convex) local statistics over the space of distributions. We propose an asymptotically optimal and computationally tractable stopping time for Lorden's criterion. Under this scenario, our proposed method uses a concave global cumulative sum (CUSUM) statistic at the fusion center and suppresses the most likely false alarms using information projection. Finally, we show some numerical results of the simulation of our algorithm for the problem described.
\end{abstract}

\begin{IEEEkeywords}
quickest change detection, sensor networks, unknown distribution, KL divergence, information projection, average run length, worst average detection delay, asymptotic optimality, computational complexity
\end{IEEEkeywords}

\section{Introduction} \label{section: introduction}
Detection of change in a random process has many applications. Most recently, quickest change detection (QCD), a specific formulation of the more general change detection problem, has been used in as diverse areas as behavioral economics \cite{inconsistent}, intruder detection \cite{intruder}, cognitive radios \cite{cognitive2}, covert communication \cite{covert}, energy harvesting sensor networks \cite{harvesting} and linear systems \cite{lowlinear}. The common objective in these works has been minimizing a well defined detection delay after a single unknown change point in the observation characteristics under the condition that false alarms do not occur frequently. Two important and well studied aspects of this problem that have helped extend the applicability of the theory of QCD over the seminal works of \cite{economic,continuous,procedures,optimal} have been the availability (or lack thereof) of pre- and post-change distributions to the detection algorithm and the single sensor vs sensor network formulations of the problem.

Change detection problems with unknown pre- or post change distribution have been investigated in \cite{bounds,minimax,data2,bin} after its first appearance in \cite{procedures} where the unknown post-change distribution belongs to a single parameter family. For problems with unknown parameters after the change point, \cite{bounds} has proposed the generalized likelihood ratio and mixture likelihood ratio stopping times that are asymptotically optimal as the average run length goes to infinity. Then, \cite{minimax} has (asymptotically) optimally solved Lorden's problem (a minimax QCD problem), Pollak's problem (a non-Bayesian conditional QCD problem) and the Bayesian formulation of QCD assuming the worst realization of pre- and post-change distributions under the assumption that least favorable distributions (LFDs) exist and are identifiable. In \cite{data2}, the authors have proposed a novel constraint on the frequency of observations before the change point while the post-change distribution is unknown up to a finite set of alternatives and have successively shown an asymptotically optimal algorithm satisfying this new condition. Finally, the authors of \cite{bin} use binning as another alternative method when pre-change distribution is known and is distinguishable from the unknown post-change distribution under some level of discretization.

Similarly, the (decentralized) sensor network problem has been proposed in \cite{decentralized} and has been studied in detail in \cite{decentralized2,information,efficient,censor}. In \cite{decentralized2}, a Bayesian formulation has been asymptotically solved where the sensors have restricted local memory but full feedback from the fusion center. The authors of \cite{information} have bound the asymptotic detection delay when the sensors have limited or full local memory while processing information to be sent to the fusion center using finite alphabets. The proposed algorithms have achieved asymptotic optimality by quantizing likelihood ratios locally to form messages for limited memory systems and sending local decisions of change for full local memory networks. Then, \cite{efficient} has proposed using the sum of local CUSUM statistics at the decision center when an unknown subset of sensors undergo change which has also been proven to be asymptotically optimal universally for any such unknown subset. Finally, the paper \cite{censor} has limited the frequency of using the channel from the sensors to the fusion center by sending local CUSUM statistics only if they are above a threshold and has proven asymptotic optimality even if sensors undergo change asynchronously.

We proposed a computationally efficient change detection algorithm that utilizes the knowledge of the pre-change distribution to suppress the most likely false alarms when the post-change distribution is unknown in \cite{separating-arxiv}. In the same paper, we have shown that our proposed method is asymptotically optimal up to a multiplicative constant for Lorden's problem. In this paper, we extend those results to sensor networks with an unknown subset of affected sensors and with unknown local distributions after the change point. To the best of our knowledge, this is the first approach to address a QCD problem over sensor networks with unknown local post-change distributions. We also prove asymptotic optimality under Lorden's criterion.

We are motivated from the fact that if the statistics that describe the change in the local observation process are concave (or convex) over the space of probability distributions, the fusion center can define a concave global CUSUM statistic that can identify the most likely false alarms to be suppressed via information projection. Further, this global statistic also preserves Lipschitz continuity. Satisfying these conditions, we can extend the information projection test for QCD over a single sensor \cite{separating-arxiv} to the QCD over sensor networks when local post-change distributions are unknown.

Our contributions can be summarized as follows.
\begin{enumerate}
    \item We introduce a novel framework to address the QCD problem over sensor networks with unknown post-change distributions.
    \item We do not assume a finite set of alternatives for the local post-change distributions nor the existence of LFDs. We allow an unknown subset of affected sensors and heterogeneity across the sensors, i.e. alphabets, pre-change and post-change distributions are not necessarily the same across sensors.
    \item We extend the idea to suppress most likely false alarms to the QCD problem over sensor networks utilizing the knowledge of the pre-change distribution and the rule for stopping.
    \item We show that our algorithm is computationally lighter than the corresponding generalized likelihood ratio test (GLRT) and prove that it is asymptotically optimal for Lorden's problem up to a multiplicative constant.
\end{enumerate}

\section{Model and Problem Statement} \label{section: model}
Assume there are \(J\) sensors monitoring physically distant locations or different variables of a single process. (ex. pressure, temperature and chemical concentration of a refinement process) For each sensor \(j=1,\dots,J\), let \(\mathcal{A}_j=\{a_{j,1},\dots,a_{j,m_j}\}\) denote the finite alphabet and \(\mathcal{P}_j\) denote the probability simplex of probability mass functions (p.m.f.s) \(f_j\) over \(\mathcal{A}_j\). The observation at sensor \(j\) at time \(k\) is denoted \(X_{j,k}\) where, before unknown change point \(t_1\), each random variable \(X_{j,k}\) is i.i.d. with known pre-change p.m.f \(f_{j,0}\). After \(t_1\), an unknown subset of sensors \(\mathcal{J}\subset\{1,\dots,J\}\) undergoes a change and for all \(j\in\mathcal{J}\) and \(k\geq t_1\) \(X_{j,k}\) are i.i.d. with the unknown post-change distribution\(f_{j,1}\), independent of previous observations \(X_{j,1},\dots,X_{j,t_1-1}\). Assume \(\mathcal{J}\neq\emptyset\) and that for each observation process there exists a statistic\footnote{Some examples for concave (or convex) and Lipschitz continuous \(q_j\) over \(\mathcal{P}_j=\{f_j\}\): (1) any expectation of the form \(\sum_{a_j}h(a_j)f_j(a_j)\) like the \(r\)th moment or cross entropy \(H(f_j,g_j)\), (2) variance, (3) entropy, (4) operations with the previous examples that preserve concavity (or convexity) like \(I(f_j\|g_j)\)} \(q_j:\left(\mathcal{P}_j,l_1\right)\to\mathbb{R}\) concave and Lipschitz continuous with Lipschitz constant \(L_j\) that satisfies \(q_j(f_{j,0})=q_{j,0}=0<\underline{q}_j\leq q_{j,1}=q_j(f_{j,1})\). Thus, each \(f_{j,1}\) is sampled from an \textit{unknown subset} \(\mathcal{F}_{j,1}\) of \(q_j^{-1}([\underline{q}_j,\infty))\). We use \([J]\) to denote \(\{1,\dots,J\}\), \(X_k\) to denote \((X_{1,k},\dots,X_{J,k})\), \(X_{j,k}^l\) to denote \((X_{j,k},X_{j,k+1},\dots,X_{j,l})\), \(X_k^l\) to denote \((X_{1,k}^l,\dots,X_{J,k}^l)\) and \(\hat{f}_{X_{j,k}^l}\) and \(\hat{f}_{X_k^l} \) for the empirical p.m.f.s of \(X_{j,k}^l\) and \(X_k^l\), i.e. \(\hat{f}_{X_{j,k}^l}(x)=\frac{1}{l-k+1}\sum_{k'=k}^l\mathbb{1}_{X_{j,k'}}(x)\) and \(\hat{f}_{X_k^l}(x)=\frac{1}{l-k+1}\sum_{k'=k}^l\mathbb{1}_{X_{k'}}(x)\). Finally, \(I\left(f\middle\|f'\right)=\sum_a f(a)\log\frac{f(a)}{f'(a)}\) denotes the Kullback-Leibler (KL) divergence between distributions \(f\) and \(f'\) in nats and \(P_{f_1,\mathcal{J},t_1}\) denotes the probability law when post-change distributions are \(f_{1,1},\dots,f_{J,1}\), subset of affected sensors is \(\mathcal{J}\) and change point is \(t_1\). When change does not occur, we denote \(P_{f_1,\mathcal{J},\infty}\) briefly as \(P_\infty\).

We want to solve Lorden's problem of minimizing the worst average detection delay (WADD) subject to a minimum average run length (ARL) in a sensor network. We define the WADD and the ARL of a stopping time \(t_a\) as follows.
\begin{align}
    \inf_{t_a}\ &\sup_{f_1,\mathcal{J},t_1}\esssup_{X_1^{t_1-1}} E_{f_1,\mathcal{J},t_1}\left(\left(t_a-t_1+1\right)^+\middle|X_1^{t_1-1}\right) \label{equation: wadd}\\
    \text{s.t.}\ &E_\infty t_a\geq\gamma. \label{equation: arl}
\end{align}

\section{Network Information Projection Test} \label{section: algorithm}

\subsection{Global statistic}
Let \(\mathcal{A}=\times_j\mathcal{A}_j\) be the Cartesian product alphabet with \(m=\prod_jm_j\) letters and \(\mathcal{P}\) denote the probability simplex over \(\mathcal{A}\). Then, given the change point \(t_1\), for \(k<t_1\), \(X_k\) is i.i.d. with \(f_0=\otimes_j f_{j,0}\in\mathcal{P}\) and for \(k\geq t_1\), \(X_k\) is i.i.d. with \(f_1=\otimes_j\tilde{f}_j\in\mathcal{P}\) where \(\tilde{f}_j=f_{j,1}\in\mathcal{F}_{j,1}\) if \(j\in\mathcal{J}\) and \(f_{j,0}\) otherwise. Let us define a global statistic \(q:\mathcal{P}\to\mathbb{R}\) to study the effect of the change on the network.
\begin{align*}
    q(f)&= \sum_jq_j\left(f_j\right)\\
    f_j(a_j)&= \sum_{a_1}\dots\sum_{a_{j-1}}\sum_{a_{j+1}}\dots\sum_{a_J}f(a_1,\dots,a_j,\dots,a_J)\\
    &= \sum_{-a_j}f(a).
\end{align*}
Note that \(q\) is concave and Lipschitz continuous with constant \(L=\sum_j L_j\) over \(\mathcal{P}\). (see Appendix \ref{subsection: concavity} and \ref{subsection: lipschitzness}) Thus, for any \(\eta\), \(\left\{f\in\mathcal{P}\middle|q(f)=\sum_jq_j(f_j)\geq\eta\right\}\) is closed, convex and bounded.

\subsection{Algorithm}
Using \(q\), we define a CUSUM statistic to be utilized in the first stage of detection. Our CUSUM statistic is similar to the SUM scheme in \cite{efficient} where this problem is solved by computing a local CUSUM statistic and then summing them in the fusion center, i.e. \(\sum_j\max_{l_j\leq k+1}(k-l_j+1)q_j\left(\hat{f}_{j,l_j}^k\right)\). Although the local maximization over the possible change points reduces the noise in the fusion center, the algorithm proposed in \cite{efficient} may be suboptimal if multiple sensors are affected simultaneously since the estimated change points are computed locally. Therefore we employ a single maximization at the fusion center. Using a sum of sign changing drift terms for each of the local sensors (ex. \(\sum_j kq_j\left(\hat{f}_{j,1}^k\right)\)), we want the overall drift to become positive even if only a small subset \(\mathcal{J}\subset[J]\) of sensors undergo a change, ex. if \(\mathcal{J}\) is a singleton. Therefore we use local statistics that have no drift before the change point and add a negative drift of \(-\kappa\) at the fusion center. Let \(S_k\) denote the global CUSUM statistic, then
\begin{align*}
    S_k&= \max_{\tau_k\leq l\leq k+1}(k-l+1)\left(\sum_jq_j\left(\hat{f}_{X_{j,l}^k}\right)-\kappa\right)\\
    &= \max_{\tau_k\leq l\leq k+1}(k-l+1)\left(q\left(\hat{f}_{X_l^k}\right)-\kappa\right)
\end{align*}
where \(0<\kappa<\min_j\underline{q}_j\) and we denote \(\min_j\underline{q}_j-\kappa=\underline{q}\). Given a window size \(n\), we define the most likely false alarm distribution \(f_n^*\), i.e. the empirical distribution that is most likely under \(f_0\) among those for which \(S_k\geq c^S\), and KL divergence \(D_k\) from it as
\begin{align*}
    f_n^*&= \argmin_{q(f)\geq\frac{c^S}{n}+\kappa}I\left(f\middle\|f_0\right)\\
    i_k&= \argmax_{\tau_k\leq l\leq k+1}(k-l+1)\left(q\left(\hat{f}_{X_l^k}\right)-\kappa\right)\\
    n_k&= k-i_k+1\\
    D_k&= I\left(\hat{f}_{X_{i_k}^k}\middle\|f_{n_k}^*\right)
\end{align*}
Note that if \(\{q(f)\geq\frac{c^S}{n}+\kappa\}\neq\emptyset\), then \(f_n^*\) is unique since since \(I(f\|f_0)\) is strictly convex and continuous in \(f\). Finally, define the reset times \(\tau_k\) and the stopping times \(t_S\) and \(t_I\) as
\begin{align*}
    \tau_{k+1}&= \begin{cases}
    k+1, &S_k\geq c^S, D_k<c_{n_k}^D\\
    \tau_k, &\text{otherwise} 
    \end{cases}\\
    t_S&= \inf\left\{k\middle|S_k\geq c^S\right\}\\
    t_I&= \inf\left\{k\middle|\left(S_k,D_k\right)\geq\left(c^S,c_{n_k}^D\right)\right\}.
\end{align*}
We call \(t_I\) as the network information projection test (NIPT) and prove that it is asymptotically optimal as \(\gamma\to\infty\) up to a constant factor.

\subsection{Complexity} \label{subsection: complexity}
The algorithm can be initialized by computing the most likely false alarm distributions \(f_n^*\) for \(n\in\mathcal{O}\left(c^S\right)\). For each \(n\), \(f_n^*\) is the solution to the convex optimization
\begin{align*}
    \min_f\ &I\left(f\middle\|f_0\right)\\
    \text{s.t.}\ &{q(f)\geq\frac{c^S}{n}+\kappa}\\
    &f\in\mathcal{P}.
\end{align*}
Without loss of generality, assume the set of feasible solutions is nonempty, \(f_0\) is not feasible and \(f_0(a)>0\) for all \(a\). For differentiable \(q\), using the KKT conditions, we can reformulate this as a search over Lagrangian parameters \(\lambda\geq 0\) and \(\nu\) that satisfy
\begin{align*}
    \sum_af_0(a)\exp(\lambda\tilde{q}(f))\exp(-\nu)&= 1\\
    q(f)&= \frac{c^S}{n}+\kappa
\end{align*}
where \(\tilde{q}\) is the gradient of \(q\) and \(f(a)=f_0(a)\exp(\lambda\tilde{q}(f)-\nu)\). Then, if, at each iteration, \(q\) and \(\tilde{q}\) can be computed in \(\mathcal{O}(m)\) time from \(f\), solving \(f_n^*\) has complexity \(\mathcal{O}\left(m\log\frac{1}{\epsilon}\right)\) where \(\epsilon\) is the maximum allowable \(l_1\) error for the solution. Then, the initialization process for NIPT has complexity \(\mathcal{O}\left(c^Sm\log\frac{1}{\epsilon}\right)\). (or, as will be shown, \(\mathcal{O}\left(\log\gamma\ m\log\frac{1}{\epsilon}\right)\)) Finally, the run time complexity to compute \(S_k\) and \(D_k\) is \(\mathcal{O}(\gamma m)\).

On the other hand, GLRT has no initialization but solves the noniterative optimization
\begin{align*}
    \max_{l\leq k+1,\mathcal{J}}\sup_{f_1}\sum_{k'=l}^k \sum_{j\in\mathcal{J}}\log\frac{f_1(X_k)}{f_0(X_k)}
\end{align*}
during run time. Note that, since the set of local post-change distributions are unknown, GLRT can only utilize the knowledge of the super sets \(q_j^{-1}([\underline{q}_j,\infty))\) instead of \(\mathcal{F}_{j,1}\). Then, at each iteration \(k\), GLRT solves the convex problem
\begin{align*}
    \max_{l\leq k+1,\mathcal{J}}\min_{f_{j,1}|j\in\mathcal{J}}\sum_{j\in\mathcal{J}}I\left(\hat{f}_{X_{j,l}^k}\middle\|f_{j,1}\right)
\end{align*}
in \(\Omega\left(k2^Jm\log\frac{1}{\epsilon}\right)\) time, using, for example, the center of gravity method \cite{optimization}. Then, the run time complexity of GLRT is larger than that of the NIPT \(\mathcal{O}(\gamma m)\).

\section{Quickest Change Detection over Sensor Networks} \label{section: bounds}
We prove the asymptotic optimality by lower bounding ARL and upper bounding WADD as a function of the thresholds \(c^S\) and \(c_n^D\). For the full proofs of Theorem \ref{theorem: arl} and Lemma \ref{lemma: add}, see Appendix B of \cite{quickest-arxiv}.

First, we prove a lemma that characterizes the stopping time \(t_I\) in terms of the stopping time \(t_S\) and the conditional expected probability that the second stage of the algorithm classifies the empirical distribution as change, i.e. the probability that \(D_k\geq c_{n_k}^D\) at \(k=t_S\).
\begin{lemma} \label{lemma: conditional}
For any \(f_1,\mathcal{J}\) and \(t_1\),
\begin{align*}
    E_{f_1,\mathcal{J},t_1} t_I&= \frac{E_{f_1,\mathcal{J},t_1} t_S}{E_{f_1,\mathcal{J},t_1}\left(P_{f_1,\mathcal{J},t_1}\left(D_{t_S}\geq c_{n_{t_S}}^D\middle|t_S\right)\right)}.
\end{align*}
\end{lemma}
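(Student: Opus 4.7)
My plan is to realize $t_I$ as a random sum of inter-trigger gaps between successive crossings of the level $c^S$ by $S_k$ and then apply Wald's identity. I would first set $\sigma_1 = t_S$ and, recursively for $i \geq 2$, let $\sigma_i$ be the first $k > \sigma_{i-1}$ at which $S_k \geq c^S$, using the reset $\tau_{\sigma_{i-1}+1} = \sigma_{i-1}+1$ that is triggered whenever the $(i-1)$-st attempt fails (i.e.\ $D_{\sigma_{i-1}} < c^D_{n_{\sigma_{i-1}}}$). Writing $L_i = \sigma_i - \sigma_{i-1}$ with $\sigma_0 = 0$, $Y_i = \mathbb{1}\{D_{\sigma_i} \geq c^D_{n_{\sigma_i}}\}$, and $N = \inf\{i : Y_i = 1\}$, the definitions of $t_S$ and $t_I$ give
\[
t_I = \sigma_N = \sum_{i=1}^N L_i.
\]

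The second step would be to argue that the attempts constitute a clean restart. Because $\tau_{\sigma_{i-1}+1} = \sigma_{i-1}+1$, the statistics $S_k$ and $D_k$ for $k > \sigma_{i-1}$ are measurable functions of the observations $X_{\sigma_{i-1}+1}^k$ only. When those observations are i.i.d., this implies that $(L_i, Y_i)_{i \geq 1}$ is an i.i.d.\ sequence with each pair distributed as $(t_S, \mathbb{1}\{D_{t_S} \geq c^D_{n_{t_S}}\})$, and in particular $N$ is geometric with parameter
\[
p = P_{f_1,\mathcal{J},t_1}\!\left(D_{t_S} \geq c^D_{n_{t_S}}\right) = E_{f_1,\mathcal{J},t_1}\!\left[ P_{f_1,\mathcal{J},t_1}\!\left(D_{t_S} \geq c^D_{n_{t_S}} \mid t_S\right) \right]
\]
by the tower property, which is exactly the denominator in the claim.

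The conclusion then comes from Wald's identity applied to the i.i.d.\ sum $\sum_{i=1}^N L_i$ at the geometric stopping time $N$: $E_{f_1,\mathcal{J},t_1}\, t_I = E[N]\, E[L_1] = E_{f_1,\mathcal{J},t_1} t_S / p$, matching the stated formula. Integrability of $N$ and $L_1$ follows from the assumption $0 < \kappa < \min_j \underline{q}_j$, which makes the drift of $q(\hat f) - \kappa$ strictly positive after the change (finite $E[t_S]$) and the level $c^S$ large enough to give geometric trials.

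The main obstacle I anticipate is the i.i.d.\ restart step: it holds cleanly when $t_1 = \infty$ (pre-change, the ARL case) and when $t_1 = 1$ (fully post-change, the case used for the WADD bound via the standard Lorden reduction), since then all $X_k$ share a common distribution and the reset time is a renewal instant. For $1 < t_1 < \infty$ an attempt can straddle the change point so that attempts are no longer identically distributed; since only the two boundary regimes feed into the bounds in Section~\ref{section: bounds}, I would apply the lemma in exactly those regimes and note that the identity in its stated form is what the subsequent arguments actually use.
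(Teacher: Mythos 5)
Your proof is correct and is essentially the paper's own argument made explicit: the paper writes the same renewal identity in one step as \(E t_I = E t_S + E\left(P\left(t_I\neq t_S\middle|t_S\right)\right)E t_I\) and solves for \(E t_I\), which is precisely your decomposition into a geometric number of i.i.d.\ attempts followed by Wald's identity. Your closing caveat about \(1<t_1<\infty\) is well taken---the paper's one-line version silently relies on the same clean-restart property, and the lemma is indeed only invoked in the two boundary regimes \(t_1=\infty\) (Theorem \ref{theorem: arl}) and \(t_1=1\) (Lemma \ref{lemma: add}) where that property holds.
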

\begin{proof}
Let us express \(E_{f_1,\mathcal{J},t_1} t_I\) conditional on \(t_S\). Whenever \(S_k\) crosses the threshold \(c^S\), the algorithm stops and declares change at \(t_I=t_S=k\) if and only if \(D_k\geq c_{n_k}^D\). Otherwise, \(D_k<c_{n_k}^D\) and \(\tau_{k+1}=k+1\) resetting the algorithm. Thus,
\begin{align*}
    E t_I&= E\left(E\left(t_I\middle|t_S\right)\right)\\
    &= E t_S+E\left(P\left(t_I\neq t_S\middle|t_S\right)\right)E t_I\\
    &= \frac{E t_S}{E\left(P\left(D_{t_S}\geq c_{n_{t_S}}^D\middle|t_S\right)\right)}
\end{align*}
where we have briefly used \(E\) and \(P\) to denote \(E_{f_1,\mathcal{J},t_1}\) and \(P_{f_1,\mathcal{J},t_1}\) respectively.
\end{proof}

\subsection{Average run length}
Using Lemma \ref{lemma: conditional}, we show that the ARL increases asymptotically exponentially with the first threshold \(c^S\) at a rate that approaches \(v^*+\frac{2\kappa}{L^2}\) for a suitable choice of \(c_n^D\).
\begin{theorem} \label{theorem: arl}
For any \(\rho\in(0,1)\), if \(c_n^D=c^D\) for \(n>(1-\rho)\frac{c^S}{\kappa}\) and \(c^D\geq\frac{(2-\rho)^2\kappa^2}{2(1-\rho)^2L^2}\), then
\begin{align*}
    ARL(t_I)&\geq \exp\left(\left(v^*+\frac{2\kappa}{L^2}+o(1)\right)c^S\right)
\end{align*}
as \(c^S\to\infty\) and \(\rho\to 0\) where \(v^*>0\) satisfies
\begin{align}
    \psi(v^*)&= \log E_\infty\exp(v^*\left(\left(\sum_jX_{j,k}\right)-\kappa\right))=0. \label{equation: log moment generating function}
\end{align}
\end{theorem}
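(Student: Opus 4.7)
The plan is to apply Lemma~\ref{lemma: conditional} under \(P_\infty\), writing
\begin{align*}
E_\infty t_I &= \frac{E_\infty t_S}{E_\infty\left[P_\infty\left(D_{t_S} \geq c_{n_{t_S}}^D \middle| t_S\right)\right]},
\end{align*}
then lower bounding the numerator by \(\exp\left((v^* + o(1)) c^S\right)\) and upper bounding the denominator by \(\exp\left(-(2\kappa/L^2 + o(1)) c^S\right)\). Multiplying the two bounds yields the theorem.

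For the numerator I would follow the classical random-walk analysis of CUSUM. Under \(P_\infty\) the per-step drift of the accumulated statistic is strictly negative since \(q(f_0)=0<\kappa\), and the constant \(v^*\) defined by \(\psi(v^*)=0\) in~(\ref{equation: log moment generating function}) is exactly the Cram\'er/Wald exponent governing the upper-tail probability that the one-sided walk \((k-l+1)(q(\hat f_{X_l^k})-\kappa)\) crosses \(c^S\) before resetting. A union bound over windows \(l \in [\tau_k, k+1]\), combined with a geometric/renewal argument on the successive restart epochs \(\tau_k\), yields \(E_\infty t_S \geq \exp\left((v^* + o(1)) c^S\right)\).

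For the denominator the key tool is the Pythagorean inequality for the I-projection: on the event \(\{t_S = k, n_k = n\}\) we have \(q(\hat{f}_{X_{i_k}^k}) \geq c^S/n + \kappa\), so \(\hat{f}_{X_{i_k}^k}\) lies in the convex constraint set defining \(f_n^*\), whence \(I(\hat{f}_{X_{i_k}^k}\|f_0) \geq D_k + I(f_n^*\|f_0)\). Pinsker's inequality combined with the \(L\)-Lipschitz continuity of \(q\) gives \(I(f_n^*\|f_0) \geq (c^S/n + \kappa)^2/(2L^2)\), so that the product \(n I(f_n^*\|f_0) \geq (c^S + n\kappa)^2/(2nL^2)\) is a convex function of \(n\) attaining its minimum \(2\kappa c^S/L^2\) at \(n = c^S/\kappa\). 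A Sanov-type bound then produces
\begin{align*}
P_\infty\left(D_k \geq c_n^D, t_S = k, n_k = n\right) &\leq (n+1)^{m-1}\exp\left(-n(c_n^D + I(f_n^*\|f_0))\right).
\end{align*}
In the regime \(n > (1-\rho)c^S/\kappa\), where \(c_n^D = c^D\), the Pinsker contribution alone already delivers the desired rate; for \(n \leq (1-\rho)c^S/\kappa\) the same Sanov bound applied to the event \(\{S_k \geq c^S\}\) itself yields the larger rate \((2 + \rho^2/(2(1-\rho)))\kappa c^S/L^2\). The hypothesis \(c^D \geq (2-\rho)^2\kappa^2/(2(1-\rho)^2 L^2)\) matches the Pinsker lower bound on \(I(f_n^*\|f_0)\) at the boundary \(n = (1-\rho)c^S/\kappa\), which synchronizes the two regimes so that summing over \(k\) and \(n\) absorbs the polynomial prefactor into \(o(c^S)\).

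The main obstacle is handling the dependence on the random window size \(n_{t_S}\) uniformly, since the combined exponent \(n I(f_n^*\|f_0)\) is minimized precisely around \(n = c^S/\kappa\) — which is exactly why the two-regime split and the specific threshold choice for \(c^D\) are needed. Letting \(c^S \to \infty\) and then \(\rho \to 0\) finally delivers the stated asymptotic rate \(v^* + 2\kappa/L^2\).
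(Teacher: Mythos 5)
Your proposal follows the same skeleton as the paper's proof: Lemma~\ref{lemma: conditional} reduces the ARL to the ratio \(E_\infty t_S / E_\infty\left(P_\infty\left(D_{t_S}\geq c_{n_{t_S}}^D\middle|t_S\right)\right)\), the numerator is handled by the Chernoff/Wald tilt \(v^*\) from \eqref{equation: log moment generating function}, and the denominator is split at \(N=(1-\rho)\frac{c^S}{\kappa}\), with \(l_1\)-concentration plus Lipschitz continuity of \(q\) controlling the small-window regime and the Pythagorean identity for the I-projection controlling the large-window regime. The one genuine difference is in the large-window regime: the paper discards \(I\left(f_n^*\middle\|f_0\right)\) entirely, keeps only \(I\left(\hat f\middle\|f_0\right)\geq D_k\geq c^D\), and recovers the rate \(\frac{2\kappa}{L^2}\) from the product \(Nc^D\) via the hypothesis \(c^D\geq\frac{(2-\rho)^2\kappa^2}{2(1-\rho)^2L^2}\); you instead retain \(I\left(f_n^*\middle\|f_0\right)\geq\frac{(c^S/n+\kappa)^2}{2L^2}\) via Pinsker and observe that \(nI\left(f_n^*\middle\|f_0\right)\geq\frac{2\kappa c^S}{L^2}\) uniformly in \(n\), minimized at \(n=\frac{c^S}{\kappa}\). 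Your version makes the origin of the exponent \(\frac{2\kappa}{L^2}\) more transparent (it is the worst-case cost of the projection constraint, not an artifact of the choice of \(c^D\)), at the price of slightly obscuring why the stated lower bound on \(c^D\) is the natural threshold. One bookkeeping caveat: your Sanov bound on \(P_\infty\left(D_k\geq c_n^D,\,t_S=k,\,n_k=n\right)\) does not decay in \(k\), so summing it over all \(k\) diverges; you must instead follow the paper and decompose \(P_\infty\left(D_{t_S}\geq c_{n_{t_S}}^D\middle|t_S\right)\) over the window size \(n_{t_S}\) alone, so that only the sum over \(n\) remains, which converges geometrically because \(c^D>0\). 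This is a fixable organizational slip rather than a flaw in the underlying estimates.
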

\begin{sketch of proof}
We use Lemma \ref{lemma: conditional} to bound the ARL of \(t_I\) in terms of ARL of \(t_S\) and the probability that the alarm is suppressed and the algorithm is reset.

We first express the probability of reset conditional on the window size \(n\) at the first stopping time, \(t_S\), being less than a fraction of its most likely outcome. Using the \(l_1\) bound in \cite{inequalities}, we show that it is highly likely that \(n>(1-\rho)\frac{c^S}{\kappa}\). In that case, given \(c^D\), we bound the probability of reset using the Pythagorean theorem for relative entropy \cite{elements}. Finally, we lower bound the ARL of \(t_S\) with a suitable version of Wald's identity \cite{discrete}. Then, for large enough \(c^D\) the result follows.
\end{sketch of proof}

\subsection{Worst average detection delay}
In this subsection, we prove an asymptotic upper bound for the WADD in \eqref{equation: wadd}. First, similar to \cite{procedures}, we argue that the worst change point is at \(t_1=1\) where, in our case, we also have to account for the separate stages.
\begin{lemma} \label{lemma: change point}
For any \(f_1,\mathcal{J},t_1\) and \(X_1^{t_1-1}\), the stopping time \(t_I\) satisfies
\begin{align*}
    E_{f_1,\mathcal{J},t_1}\left(\left(t_I-t_1+1\right)^+\middle|X_1^{t_1-1}\right)&\leq E_{f_1,\mathcal{J},1}\left(t_S+t_I\right).
\end{align*}
\end{lemma}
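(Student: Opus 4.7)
The plan is to adapt the classical Lorden argument (worst change point at \(t_1=1\)) to the two-stage reset mechanism by decomposing the post-change delay into the time of the first crossing of \(c^S\) after \(t_1\) and, if that crossing triggers a reset, the fresh delay afterwards. First, on \(\{t_I<t_1\}\) the left side vanishes, so it suffices to argue on \(\{t_I\geq t_1\}\). Condition on \(X_1^{t_1-1}\), which fixes the reset pointer \(\tau_{t_1}\in\{1,\dots,t_1\}\). Define \(T=\inf\{k\geq t_1: S_k\geq c^S\}\) and the \emph{fresh} CUSUM \(\tilde S_k=\max_{t_1\leq l\leq k+1}(k-l+1)(q(\hat f_{X_l^k})-\kappa)\) with hitting time \(\tilde T=\inf\{k\geq t_1:\tilde S_k\geq c^S\}\). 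On \(\{T>k\}\), no reset has occurred during \([t_1,k]\), so \(\tau_k=\tau_{t_1}\leq t_1\) and the feasible range of \(l\) in \(S_k\) contains \([t_1,k+1]\); this gives \(S_k\geq\tilde S_k\) pointwise, hence \(T\leq\tilde T\). Since \(\tilde S_k\) depends only on the i.i.d.\ post-change observations \(X_{t_1}^k\), the conditional law of \(\tilde T-t_1+1\) equals that of \(t_S\) under \(P_{f_1,\mathcal J,1}\), so \(E[\tilde T-t_1+1\mid X_1^{t_1-1}]=E_{f_1,\mathcal J,1}[t_S]\).

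At time \(T\) the test either declares (\(D_T\geq c_{n_T}^D\), yielding \(t_I=T\)) or resets (\(\tau_{T+1}=T+1\)). On the reset event the algorithm restarts from a clean state on the post-change stream \(X_{T+1},X_{T+2},\dots\), so by a strong Markov/renewal identity the residual \(t_I-T\) has the same conditional law as \(t_I\) under \(P_{f_1,\mathcal J,1}\). Writing
\(\,(t_I-t_1+1)^+\leq(\tilde T-t_1+1)+(t_I-T)\,\mathbb{1}_{\{D_T<c_{n_T}^D\}}\)
and taking the conditional expectation yields the claimed bound \(E_{f_1,\mathcal J,1}[t_S]+E_{f_1,\mathcal J,1}[t_I]\).

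The main delicate step is the renewal identity at the suppressed alarm. Formally it uses that \(T\) is a stopping time for the filtration generated by \((X_{t_1},X_{t_1+1},\dots)\) after conditioning on \(X_1^{t_1-1}\), and that a reset wipes all algorithmic memory except the pointer \(\tau_{T+1}=T+1\), so the post-\(T\) trajectory is equal in distribution to a fresh run under \(P_{f_1,\mathcal J,1}\). The pointwise dominance \(S_k\geq\tilde S_k\) before \(T\) is just bookkeeping about the window used by the max, but it must be argued carefully because \(\tau_k\) is itself random and determined by the pre-change history we condition on.
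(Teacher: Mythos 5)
Your argument is correct and follows essentially the same route as the paper's proof: you bound the first post-change crossing of \(c^S\) by the delay of a fresh CUSUM started at \(t_1\) (equal in law to \(t_S\) under \(P_{f_1,\mathcal J,1}\)) via the window-dominance \(S_k\geq\tilde S_k\), and then append a restarted \(t_I\) after a suppressed alarm. The paper writes the same decomposition as \(t_S^{(u^*)}\leq t_1-1+t_S^{t_1}\) together with \(t_I\leq t_S^{(u^*)}+t_I^{t_S^{(u^*)}+1}\), so the two proofs differ only in notation.
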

\begin{sketch of proof}
This proof parallels to the proof of Lemma 6 in \cite{separating-arxiv}.

We first prove that \(t_I\) is upper bounded by the sum of the first stopping time \(t_S\) after the change the change point \(t_1\) and the restarted stopping time \(t_I\) that starts sampling after \(t_S\). Then, we use this bound in the expression of WADD \eqref{equation: wadd}. Since restarted stopping times are independent of past samples, we can set \(t_1=1\) and disregard the worst realization over samples before the change point.
\end{sketch of proof}
We prove another lemma to show that \(t_I\) is asymptotically linear with the threshold \(c^S\) if \(t_1=1\) and under certain conditions for \(c_n^D\) which allow empirical distributions sampled from a \(f_1\) to be classified as change with high probability.
\begin{lemma} \label{lemma: add}
For any \(f_1,\mathcal{J}\) and \(\rho>0\), if \(t_1=1\) and \(c_n^D=0\) for \(n\leq(1+\rho)\frac{c^S}{\underline{q}}\), then
\begin{align*}
    E_{f_1,\mathcal{J},1}t_I&\leq \frac{c^S}{\underline{q}}
\end{align*}
as \(c^S\to\infty\) and \(\rho\to 0\).
\end{lemma}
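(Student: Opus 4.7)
The plan is to apply Lemma~\ref{lemma: conditional} with $t_1 = 1$, reducing the problem to two tasks: an upper bound of $(1+o(1))c^S/\underline{q}$ on $E_{f_1,\mathcal{J},1}\,t_S$, and a lower bound of $1-o(1)$ on $E_{f_1,\mathcal{J},1}\!\left[P_{f_1,\mathcal{J},1}(D_{t_S}\geq c_{n_{t_S}}^D\mid t_S)\right]$. The underlying structural fact is that under $P_{f_1,\mathcal{J},1}$, since $\mathcal{J}\neq\emptyset$ and $q_j(f_{j,0})=0$ for $j\notin\mathcal{J}$, we have $q(f_1)=\sum_{j\in\mathcal{J}}q_j(f_{j,1})\geq\min_j\underline{q}_j$, so the global CUSUM has strictly positive post-change drift $q(f_1)-\kappa\geq\underline{q}$.

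To bound $E\,t_S$, I would observe that if $t_S>k$ then no reset has occurred by time $k$ (each reset requires $S_j\geq c^S$), so $\tau_k=1$ and taking $l=1$ in the defining maximization gives $S_k\geq k(q(\hat f_{X_1^k})-\kappa)$. Hence $\{t_S>k\}\subseteq\{q(\hat f_{X_1^k})<c^S/k+\kappa\}$. For $k>c^S/\underline{q}$, this event forces $q(f_1)-q(\hat f_{X_1^k})>\underline{q}-c^S/k>0$, and by Lipschitz continuity of $q$ with constant $L$ this gives a positive lower bound on $\|\hat f_{X_1^k}-f_1\|_1$. The $l_1$ concentration inequality of \cite{inequalities} then yields an exponentially small tail in $k$, and summing $E\,t_S=\sum_{k\geq 0}P(t_S>k)$ produces $E\,t_S\leq (1+o(1))c^S/\underline{q}$ as $c^S\to\infty$.

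For the denominator, since $n_{t_S}=t_S-i_{t_S}+1\leq t_S$ and the hypothesis gives $c_n^D=0$ for $n\leq(1+\rho)c^S/\underline{q}$, on the event $\{t_S\leq(1+\rho)c^S/\underline{q}\}$ we have $c_{n_{t_S}}^D=0$ and therefore $D_{t_S}\geq 0=c_{n_{t_S}}^D$ automatically. Thus the denominator is at least $P(t_S\leq(1+\rho)c^S/\underline{q})$, which by the same concentration argument tends to $1$ as $c^S\to\infty$ for any fixed $\rho>0$.

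Combining the two bounds through Lemma~\ref{lemma: conditional} yields $E_{f_1,\mathcal{J},1}\,t_I\leq(1+o(1))\,c^S/\underline{q}$; letting $\rho\to 0$ sufficiently slowly so that the denominator remains $1-o(1)$ completes the proof. The main obstacle is calibrating the joint limits $c^S\to\infty$ and $\rho\to 0$: the reset probability decays only at rate $\exp(-\Theta(c^S\rho^2))$, so $\rho$ cannot vanish faster than $1/\sqrt{c^S}$, while one also needs the tail sum $\sum_{k>c^S/\underline{q}}P(t_S>k)$ to be $o(c^S)$, which requires uniform control of the $l_1$ concentration constant over the product alphabet $\mathcal{A}$ together with the aggregate Lipschitz constant $L=\sum_j L_j$.
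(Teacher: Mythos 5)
Your proposal is correct and follows essentially the same route as the paper's proof: the same reduction via Lemma~\ref{lemma: conditional}, the same numerator bound obtained by taking $l=1$ in the CUSUM maximization and applying the Lipschitz property of $q$ together with the $l_1$ concentration inequality of \cite{inequalities}, and the same denominator argument that on $\{t_S\leq(1+\rho)c^S/\underline{q}\}$ one has $n_{t_S}\leq t_S$, hence $c_{n_{t_S}}^D=0$ and the alarm is never suppressed. Your closing remark about calibrating $\rho\to 0$ against $c^S\to\infty$ (the reset probability decaying like $\exp(-\Theta(\rho^2 c^S))$) is a point the paper leaves implicit but handles identically in substance.
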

\begin{sketch of proof}
We use Lemma \ref{lemma: conditional} to express the detection delay of \(t_I\) when \(t_1=1\) in terms of the detection delay of \(t_S\) and the probability that the alarm is suppressed and the algorithm is reset.

Given \(f_1,\mathcal{J}\) and \(t_1=1\), we upper bound the delay of \(t_S\) bounding the probability that \(S_k\) has not crossed the threshold \(c^S\) even though it has a positive drift of \(\sum_{j\in\mathcal{J}}q_{j,1}-\kappa\geq\min_j\underline{q}_j-\kappa=\underline{q}\). Then, we lower bound the probability of reset using the fact that if with high probability \(t_S\leq(1+\rho)\frac{c^S}{\underline{q}}\), then \(c_n^D=0\) in which case the second stage of the algorithm, comparing \(D_k\) with \(c_n^D\), does not suppress any alarm.
\end{sketch of proof}
The next theorem established our upper bound for WADD using Lemma \ref{lemma: change point}, \ref{lemma: add} and the bound \(t_S\leq t_I\).
\begin{theorem} \label{theorem: wadd}
For any \(f_1,\mathcal{J},t_1\) and \(\rho>0\), if \(c_n^D=0\) for \(n\leq(1+\rho)\frac{c^S}{\underline{q}}\), then
\begin{align*}
    WADD(t_I)&\leq \frac{2c^S}{\underline{q}}
\end{align*}
as \(c^S\to\infty\) and \(\rho\to 0\).
\end{theorem}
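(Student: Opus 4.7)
The plan is to combine the three ingredients signalled in the theorem statement: Lemma \ref{lemma: change point}, Lemma \ref{lemma: add}, and the obvious comparison \(t_S\leq t_I\) (which holds pathwise because the second-stage information-projection test only adds a further requirement for stopping). Put together, these should collapse the supremum defining the WADD to a worst-case computation at \(t_1=1\), and then bound each of the two resulting summands by \(c^S/\underline{q}\).

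First, I would unpack the definition of WADD in \eqref{equation: wadd} and apply Lemma \ref{lemma: change point} directly inside the essential supremum: for every \(f_1,\mathcal{J},t_1\) and every realization of \(X_1^{t_1-1}\),
\begin{equation*}
E_{f_1,\mathcal{J},t_1}\!\left(\left(t_I-t_1+1\right)^+\!\middle|X_1^{t_1-1}\right)\leq E_{f_1,\mathcal{J},1}\!\left(t_S+t_I\right).
\end{equation*}
The right-hand side no longer depends on \(t_1\) or on the pre-change samples, so taking \(\sup_{t_1}\esssup_{X_1^{t_1-1}}\) leaves it unchanged, and the outer \(\sup_{f_1,\mathcal{J}}\) survives. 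In other words,
\begin{equation*}
WADD(t_I)\leq \sup_{f_1,\mathcal{J}}E_{f_1,\mathcal{J},1}t_S+\sup_{f_1,\mathcal{J}}E_{f_1,\mathcal{J},1}t_I.
\end{equation*}

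Next, I would handle each summand. Since \(t_S\) is the first time \(S_k\geq c^S\) and \(t_I\) is the first time at which additionally \(D_k\geq c_{n_k}^D\), we have \(t_S\leq t_I\) pointwise on every sample path, so \(E_{f_1,\mathcal{J},1}t_S\leq E_{f_1,\mathcal{J},1}t_I\). Lemma \ref{lemma: add} then supplies exactly the bound we need for the right-hand side: under the stated choice \(c_n^D=0\) for \(n\leq (1+\rho)c^S/\underline{q}\), for every \(f_1,\mathcal{J}\),
\begin{equation*}
E_{f_1,\mathcal{J},1}t_I\leq \frac{c^S}{\underline{q}}
\end{equation*}
asymptotically as \(c^S\to\infty\) and \(\rho\to 0\). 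Because this bound is uniform in \((f_1,\mathcal{J})\), the supremum does not worsen it. Adding the two resulting copies of \(c^S/\underline{q}\) yields the claimed \(2c^S/\underline{q}\).

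The routine pieces of this argument are the pathwise inequality \(t_S\leq t_I\) and the substitution of Lemma \ref{lemma: add}. The only delicate point is that the bound from Lemma \ref{lemma: add} is stated pointwise in \((f_1,\mathcal{J})\) but with an \(o(1)\)-type hidden term; I would therefore be a little careful to check that the asymptotics in \(c^S\) and \(\rho\) can be pushed outside the supremum over \((f_1,\mathcal{J})\). This holds because the implicit constants in the proof of Lemma \ref{lemma: add} depend on \((f_1,\mathcal{J})\) only through \(\underline{q}=\min_j\underline{q}_j-\kappa\), which is already a worst-case quantity built into the model. With that observation, the suprema over \((f_1,\mathcal{J})\) may be exchanged with the limits in \(c^S\) and \(\rho\), and the theorem follows.
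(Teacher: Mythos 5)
Your proposal is correct and follows essentially the same route as the paper: apply Lemma \ref{lemma: change point} to reduce the essential supremum over \(t_1\) and \(X_1^{t_1-1}\) to \(E_{f_1,\mathcal{J},1}(t_S+t_I)\), use the pathwise inequality \(t_S\leq t_I\), and invoke Lemma \ref{lemma: add} to bound each summand by \(c^S/\underline{q}\). Your added remark that the asymptotic bound of Lemma \ref{lemma: add} is uniform over \((f_1,\mathcal{J})\) (since its constants depend only on \(m\), \(L\), \(\rho\), and \(\underline{q}\)) is a point the paper glosses over, and is a worthwhile clarification.
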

\begin{proof}
For any \(f_1,\mathcal{J}\) and \(t_1\),
\begin{align*}
    &WADD(t_I)\\
    &= \sup_{f_1,\mathcal{J},t_1}\esssup_{X_1^{t_1-1}} E_{f_1,\mathcal{J},t_1}\left(\left(t_I-t_1+1\right)^+\middle|X_1^{t_1-1}\right)\\
    &\leq \sup_{f_1,\mathcal{J}}E_{f_1,\mathcal{J},1}\left(t_S+t_I\right)\\
    &\leq \frac{2c^S}{\underline{q}}
\end{align*}
where we have used Lemma \ref{lemma: change point} and the fact that \(t_S\leq t_I\).
\end{proof}
\begin{corollary}
For any \(f_1,\mathcal{J},t_1\) and \(0<\kappa<\frac{1}{2}\min_j\underline{q}_j\), there exists \(\rho\in(0,1)\) such that \((1+\rho)\frac{c^S}{\underline{q}}\leq(1-\rho)\frac{c^S}{\kappa}\) and if \(c_n^D=0\) for \(n\leq(1+\rho)\frac{c^S}{\underline{q}}\), \(c_n^D=c^D\) for \(n>(1-\rho)\frac{c^S}{\kappa}\) and \(c^D\geq\frac{(2-\rho)^2\kappa^2}{2(1-\rho)^2L^2}\), then
\begin{align}
    WADD(t_I)&\leq \frac{\log\gamma}{\underline{q}\left(\frac{v^*}{2}+\frac{\kappa}{L^2}\right)} \label{equation: bound}
\end{align}
as \(\gamma\to\infty\) and \(\rho\to 0\).
\end{corollary}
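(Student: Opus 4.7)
The plan is to combine Theorem \ref{theorem: arl} and Theorem \ref{theorem: wadd} by tuning $c^S$ so that the ARL just meets the constraint $\gamma$, then reading off the resulting WADD. The first thing to check is that the hypotheses of the two results are jointly satisfiable, i.e.\ that there is a single $\rho \in (0,1)$ under which both the ARL lower bound and the WADD upper bound apply with a well-defined schedule for $c_n^D$. The schedules require $c_n^D = 0$ on $n \leq (1+\rho)c^S/\underline{q}$ and $c_n^D = c^D$ on $n > (1-\rho)c^S/\kappa$, which are non-overlapping intervals iff $(1+\rho)\kappa \leq (1-\rho)\underline{q}$. Since the hypothesis $0 < \kappa < \tfrac{1}{2}\min_j \underline{q}_j$ gives $\underline{q} = \min_j \underline{q}_j - \kappa > \kappa$, we can choose $\rho$ positive but smaller than $(\underline{q}-\kappa)/(\underline{q}+\kappa)$; we also take $c^D \geq (2-\rho)^2\kappa^2/(2(1-\rho)^2 L^2)$ as required by Theorem \ref{theorem: arl}.

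Next, I would use Theorem \ref{theorem: arl} to invert the ARL bound. Since
\begin{align*}
    ARL(t_I) \geq \exp\!\left(\left(v^* + \tfrac{2\kappa}{L^2} + o(1)\right) c^S\right)
\end{align*}
as $c^S \to \infty$, choosing $c^S$ to be the smallest value for which the right-hand side exceeds $\gamma$ gives
\begin{align*}
    c^S \leq \frac{\log\gamma}{v^* + \tfrac{2\kappa}{L^2} + o(1)}
\end{align*}
as $\gamma \to \infty$, and the ARL constraint \eqref{equation: arl} is then satisfied.

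Finally, I would plug this choice of $c^S$ into Theorem \ref{theorem: wadd}, which yields $WADD(t_I) \leq 2c^S/\underline{q}$. Substituting the bound on $c^S$ gives
\begin{align*}
    WADD(t_I) \leq \frac{2 \log\gamma}{\underline{q}\left(v^* + \tfrac{2\kappa}{L^2} + o(1)\right)} = \frac{\log\gamma}{\underline{q}\left(\tfrac{v^*}{2} + \tfrac{\kappa}{L^2}\right)}
\end{align*}
as $\gamma \to \infty$ and $\rho \to 0$, which is \eqref{equation: bound}.

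There is no real obstacle here beyond careful bookkeeping: the only delicate point is that the two theorems must be applied with a common $\rho$, and this is exactly why the corollary imposes $\kappa < \tfrac{1}{2}\min_j \underline{q}_j$ — that condition is precisely what makes the feasible range of $\rho$ nonempty, so that the piecewise definition of $c_n^D$ required by the two theorems is consistent. Once $\rho$ is fixed, the rest is a straightforward inversion of the exponential ARL lower bound into the linear WADD upper bound.
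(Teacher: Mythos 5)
Your proposal is correct and follows essentially the same route as the paper's own proof: choose $\rho$ so that $\frac{1+\rho}{1-\rho}\leq\frac{\underline{q}}{\kappa}$ (which is exactly your condition $\rho\leq\frac{\underline{q}-\kappa}{\underline{q}+\kappa}$, made feasible by $\kappa<\frac{1}{2}\min_j\underline{q}_j$), then set $c^S\sim\frac{\log\gamma}{v^*+2\kappa/L^2}$ to meet the ARL constraint via Theorem \ref{theorem: arl} and substitute into the bound of Theorem \ref{theorem: wadd}. No substantive differences from the paper's argument.
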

\begin{proof}
For \(0<\kappa<\frac{1}{2}\min_j\underline{q}_j\), \(\underline{q}=\min_j\underline{q}_j-\kappa>\kappa\) and we can choose \(\rho\in(0,1)\) such that \(\frac{1+\rho}{1-\rho}\leq\frac{\underline{q}}{\kappa}\).

Let \(c_n^D=\frac{(2-\rho)^2\kappa^2}{2(1-\rho)^2L^2}\) for \(n>(1-\rho)\frac{c^S}{\kappa}\) and \(c_n^D=0\) for \(n\leq(1+\rho)\frac{c^S}{\underline{q}}\). Then, by Theorem \ref{theorem: arl} and \ref{theorem: wadd}, there exist \(c^S\) and \(\rho\) such that
\begin{align*}
    ARL(t_I)&\sim \exp\left(\left(v^*+\frac{2\kappa}{L^2}\right)c^S\right)\geq\gamma\\
    WADD(t_I)&\sim \frac{2c^S}{\underline{q}}.
\end{align*}
Thus, for \(c^S\geq\frac{\log\gamma}{\left(v^*+\frac{2\kappa}{L^2}\right)}\) we have \(WADD(t_I)\sim\frac{2\log\gamma}{\underline{q}\left(v^*+\frac{2\kappa}{L^2}\right)}\).
\end{proof}

\section{Numerical Results} \label{section: comparison}
In this section we give a numerical example by describing the problem and comparing the empirical detection performance with our bound in \eqref{equation: bound}.

Consider the finite alphabet \(\left\{-4,\dots,4\right\}\). Over a sensor network with 3 nodes, we let the local pre-change distributions \(f_{j,0}\) to be zero mean discrete Gaussian with \(\sigma^2=1\), i.e. \(f_{j,0}(a)=C\exp\left(-\frac{a^2}{2d^2}\right)\) where \(d\sim 1\). The unknown post-change distributions \(f_{j,1}\) satisfy \(\sigma^2\geq\underline{\sigma}^2=2\). For \(q_j(f_j)=\sigma_{f_j}^2-1\), \(q_j\)s are concave over \(\mathcal{P}_j\), thus, we can utilize the NIPT. We randomly specify \(1000\) different post-change distributions from \(\mathcal{F}_{j,1}=\{\sigma^2\geq 2\}\), let \(t_1=1,3,10,30,100,300\) or \(\infty\) and average over \(1000\) realizations for ARL and \(180k\) for WADD.
\begin{figure}[H]
    \centering
    \includegraphics[scale=0.205]{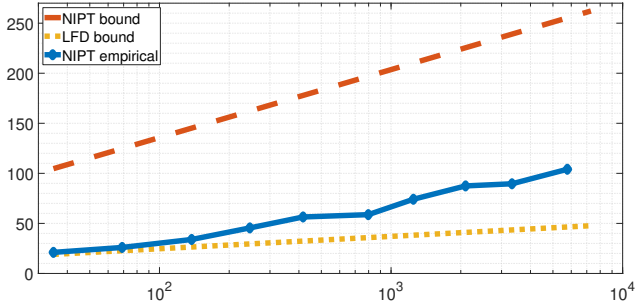}
    \caption{The WADD vs ARL for \(|J|=3,m=729\) and a change in the local variance}
    \label{figure: empirical}
\end{figure}
In Fig. \ref{figure: empirical} we have shown the empirical worst delay for NIPT over a range of ARLs and the asymptotic bound we have described in \eqref{equation: bound}. Since, for this example, we can identify local LFDs \cite{minimax}, we have also given an asymptotic lower bound for the delay when \(\mathcal{J}\) is known. The NIPT bound plotted may not be a strict bound because in general (1) Theorem \ref{theorem: wadd} loosely bounds \(t_S\) with \(t_I\) and for this example (2) the random walk \(S_k\) is bounded with \(\sum_jX_j^2-(\kappa-J)\) to find a suitable \(v^*\) and (3) we have restricted the change point \(t_1\) for testing. Comparing the two bounds, we can conclude that, for this problem, NIPT is asymptotically optimal up to a multiplicative constant less than \(6\).

\section{Conclusion} \label{section: conclusion}
We have proposed a novel QCD problem and a computationally feasible solution over sensor networks where sensor level distributions are unknown after the change point. The novelty in this paper is that we extend the definition of unknown post-change distributions as referred in QCD problems over sensor networks \textit{from} just assuming an unknown subset of affected sensors \textit{to} assuming unknown subset of affected sensors \textit{and} unknown local distributions after change.

We have proven the asymptotic optimality of network information projection testing for Lorden's criteria up to a multiplicative constant. Under some constraints on \(c_n^D\) and \(\kappa\), our two stage algorithm first performs similar to a CUSUM method with its drift changing property and then compares the empirical distribution to a likely false alarm given the threshold at the first stage and the number of samples that lead to an alarm. We have also given a numerical example of the performance of the algorithm for a special case of the problem.

As future work, we want to find tighter bounds, especially in Theorem \ref{theorem: wadd}, to prove asymptotic optimality with the optimum constant. Our algorithm is not fully decentralized since it requires \(\hat{f}\) at the fusion center whenever the first threshold is crossed, \(S_k\geq c^S\), to compute \(D_k\). Thus, we also want to construct a fully decentralized asymptotically optimal algorithm for the QCD problem over networks with unknown local post-change distributions. One method we are considering for decentralization is by modifying the second stage of our algorithm with \(c^D=\infty\). Finally, we also want to investigate the effect of having different change points across sensors that get affected and that of quantization for distribution over continuous alphabets.

\appendices
\section{Properties of \(\sum_jq_j\)}
\subsection{Concavity} \label{subsection: concavity}
For any \(f,g\in\mathcal{P}\) and \(\alpha\in[0,1]\),
\begin{align*}
    &q\left(\alpha f+(1-\alpha)g\right)\\
    &= \sum_jq_j\left(\sum_{-j}\alpha f(a)+(1-\alpha)g(a)\right)\\
    &\geq \sum_j\alpha q_j\left(\sum_{-j}f(a)\right)+(1-\alpha)q_j\left(\sum_{-j}g(a)\right)\\
    &= \alpha\sum_jq_j\left(\sum_{-j}f(a)\right)+(1-\alpha)\sum_jq_j\left(\sum_{-j}g(a)\right)\\
    &= \alpha q(f)+(1-\alpha)q(g)
\end{align*}
and \(q\) is concave over \(\mathcal{P}\).

\subsection{Lipschitzness} \label{subsection: lipschitzness}
For any \(f,g\in\mathcal{P}\), \(q\) is Lipschitz continuous over \(\mathcal{P}\) with Lipschitz constant \(\sum_jL_j\).
\begin{align*}
    \left|q(f)-q(g)\right|&= \left|\sum_jq_j\left(\sum_{-j}f(a)\right)-\sum_jq_j\left(\sum_{-j}g(a)\right)\right|\\
    &\leq \sum_j\left|q_j\left(\sum_{-j}f(a)\right)-q_j\left(\sum_{-j}g(a)\right)\right|\\
    &\leq \sum_jL_j\norm{\sum_{-j}f(a)-\sum_{-j}g(a)}_1\\
    &= \sum_jL_j\sum_{a_j}\left|\sum_{-a_j}f(a)-g(a)\right|\\
    &\leq \sum_jL_j\sum_a\left|f(a)-g(a)\right|\\
    &= \sum_jL_j\norm{f-g}_1\\
    &= \left(\sum_jL_j\right)\norm{f-g}_1
\end{align*}

\section{Proofs of Theorems}
\subsection{Proof of Theorem \ref{theorem: arl}}
\begin{theorem*}
For any \(\rho\in(0,1)\), if \(c_n^D=c^D\) for \(n>(1-\rho)\frac{c^S}{\kappa}\) and \(c^D\geq\frac{(2-\rho)^2\kappa^2}{2(1-\rho)^2L^2}\), then
\begin{align*}
    ARL(t_I)&\geq \exp\left(\left(v^*+\frac{2\kappa}{L^2}\right)c^S\right)
\end{align*}
as \(c^S\to\infty\) and \(\rho\to 0\) where \(v^*>0\) satisfies
\begin{align*}
    \psi(v^*)&= \log E_\infty\exp(v^*\left(\left(\sum_jX_{j,k}\right)-\kappa\right))=0.
\end{align*}
\end{theorem*}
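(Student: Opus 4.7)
The plan is to pair Lemma \ref{lemma: conditional} with two complementary bounds so that the reciprocal of the reset probability supplies the additional exponential rate beyond the pure CUSUM rate $v^*$. Specialising Lemma \ref{lemma: conditional} to the no-change law gives
\begin{equation*}
ARL(t_I)=\frac{E_\infty t_S}{E_\infty\!\left(P_\infty(D_{t_S}\geq c^D_{n_{t_S}}\mid t_S)\right)},
\end{equation*}
so it suffices to lower bound $E_\infty t_S$ and upper bound the conditional reset probability.

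For the numerator I would apply a Wald-type identity. Under $P_\infty$, Jensen's inequality combined with concavity of $q$ (Appendix \ref{subsection: concavity}) gives $E_\infty q(\hat f_{X_l^k})\leq q(f_0)=0$, so the effective drift of the sum defining $S_k$ is $-\kappa<0$. The scalar random walk obtained from the untruncated increments admits an exponential martingale at the root $v^*$ of $\psi(v)=0$; the classical CUSUM/renewal argument then yields $E_\infty t_S\geq\exp((v^*+o(1))c^S)$ as $c^S\to\infty$. The nonstandard feature is that the walk is defined on empirical distributions rather than real-valued increments, so I would use the Lipschitz bound of Appendix \ref{subsection: lipschitzness} together with concavity to dominate $n(q(\hat f_{X_l^k})-\kappa)$ by a genuine sum of i.i.d.\ increments whose log-MGF equals $\psi$.

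For the denominator I would split on the size of the detected window $n_{t_S}$. The event $\{n_{t_S}\leq(1-\rho)c^S/\kappa\}$ forces the empirical p.m.f.\ on a short window to satisfy $q(\hat f)\geq c^S/n+\kappa\geq\kappa/(1-\rho)$; combining Lipschitzness with the finite-alphabet $l_1$ concentration bound of \cite{inequalities} shows this probability is exponentially small in $c^S$ and can be absorbed into the $o(1)$ term. On the complementary event $n_{t_S}>(1-\rho)c^S/\kappa$ the threshold $c^D_{n_{t_S}}=c^D$ is active, and the Pythagorean identity for relative entropy \cite{elements}, applied to the closed convex set $\{f\in\mathcal{P}:q(f)\geq c^S/n+\kappa\}$ whose $I$-projection onto $f_0$ is $f_n^*$, yields
\begin{equation*}
I(\hat f_{X_{i_k}^k}\|f_0)\geq I(f_n^*\|f_0)+I(\hat f_{X_{i_k}^k}\|f_n^*)\geq I(f_n^*\|f_0)+c^D
\end{equation*}
on $\{D_k\geq c^D\}$. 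A Sanov/type-counting bound then gives
\begin{equation*}
P_\infty(D_{t_S}\geq c^D\mid t_S,n_{t_S}=n)\leq\mathrm{poly}(n)\exp(-nc^D),
\end{equation*}
and substituting $n\geq(1-\rho)c^S/\kappa$ together with $c^D\geq\frac{(2-\rho)^2\kappa^2}{2(1-\rho)^2L^2}$ produces suppression at rate $\frac{(2-\rho)^2\kappa}{2(1-\rho)L^2}c^S$, which converges to $\frac{2\kappa}{L^2}c^S$ as $\rho\to 0$.

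Substituting both bounds into Lemma \ref{lemma: conditional} delivers the claimed rate $v^*+2\kappa/L^2+o(1)$. The main technical obstacle is the transition from the unconditional Pythagorean/Sanov estimate to the conditional reset probability given $t_S$, since $\{t_S=k\}$ couples the empirical distribution at the stopping time to the entire past and can in principle bias the law of $\hat f_{X_{i_k}^k}$. I would resolve this either by bounding $E_\infty\sum_{k\leq T}\mathbbm{1}\{S_k\geq c^S,D_k\geq c^D\}$ via a single union bound over window positions and then dividing by the lower bound on $E_\infty t_S$, or by constructing an exponential martingale that handles both stages simultaneously; the second approach has the advantage of matching the sharp constants used in the Wald step for $t_S$.
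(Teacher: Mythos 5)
Your proposal is correct and follows essentially the same route as the paper: Lemma \ref{lemma: conditional}, a Wald/exponential-tilting lower bound on \(E_\infty t_S\), and a split of the reset probability on \(n_{t_S}\lessgtr(1-\rho)c^S/\kappa\) handled by the \(l_1\) concentration bound and by the Pythagorean inequality plus type counting, respectively. The conditioning-on-\(t_S\) subtlety you flag is real (the paper passes over it silently), but your proposed resolutions are consistent with, not a departure from, the paper's argument.
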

\begin{proof}
From Lemma \ref{lemma: conditional},
\begin{align}
    E_\infty t_I&= \frac{E_\infty t_S}{E_\infty\left(P_\infty\left(D_{t_S}\geq c_{n_{t_S}}^D\middle|t_S\right)\right)}. \label{equation: arl conditional}
\end{align}
Given \(\rho\in(0,1)\), let \(N\) denote \((1-\rho)\frac{c^S}{\kappa}\). Then, for the denominator in \eqref{equation: arl conditional},
\begin{align}
    P_\infty\left(D_{t_S}\geq c_{n_{t_S}}^D\middle|t_S\right)&\leq P_\infty\left(n_{t_S}\leq N\middle|t_S\right) \nonumber\\
    &+ P_\infty\left(n_{t_S}>N,D_{t_S}\geq c_{n_{t_S}}^D\middle|t_S\right) \label{equation: arl denominator}
\end{align}
For the first term in \eqref{equation: arl denominator},
\begin{align*}
    P_\infty\left(n_{t_S}\leq N\middle|t_S\right)&\leq N\max_{n\leq N}P_\infty\left(n_{t_S}=n\middle|t_S\right)\\
    &\leq N\max_{n\leq N}P_\infty\left(q\left(\hat{f}_{X_{t_S-n+1}^{t_S}}\right)\geq\frac{c^S}{n}+\kappa\middle|t_S\right)\\
    &\leq N\max_{n\leq N}P_\infty\left(\norm{\hat{f}_n-f_0}_1\geq\frac{\frac{c^S}{n}+\kappa}{L}\middle|t_S\right)
\end{align*}
where we have used the Lipschitz continuity of \(q\) at \(f_0\). Using the \(l_1\) inequality in \cite{inequalities},
\begin{align*}
    P_\infty\left(n_{t_S}\leq N\middle|t_S\right)&\leq N\max_{n\leq N}2^m\exp\left(-\frac{n}{2}\left(\frac{\frac{c^S}{n}+\kappa}{L}\right)^2\right)\\
    &\leq 2^mN\exp\left(-\frac{(2-\rho)^2\kappa}{2(1-\rho)L^2}c^S\right)
\end{align*}
since \(n\leq N<\frac{c^S}{\kappa}\). For the second term in \eqref{equation: arl denominator}, if \(c_n^D=c^D\),
\begin{align*}
    &P_\infty\left(n_{t_S}>N,D_{t_S}\geq c_{n_{t_S}}^D\middle|t_S\right)\\
    &= \sum_{n>N}P_\infty\left(n_{t_S}=n\middle|t_S\right)P_\infty\left(D_{t_S}\geq c^D\middle|n_{t_S}=n,t_S\right)\\
    &\leq \sum_{n>N}P_\infty\left(I\left(\hat{f}_{X_{t_S-n+1}^{t_S}}\middle\|f_n^*\right)\geq c^D\middle|n_{t_S}=n,t_S\right)\\
    &\leq \sum_{n>N}P_\infty\left(I\left(\hat{f}_n\middle\|f_0\right)\geq c^D\middle|n_{t_S}=n,t_S\right)\\
    &\leq \sum_{n>N}(n+1)^m\exp\left(-nc^D\right)
\end{align*}
where we have used the Pythagorean theorem for relative entropy and an upper bound on the cardinality of the set of empirical distributions \cite{elements}. Then,
\begin{align*}
    P_\infty\Big(n_{t_S}>N,D_{t_S}\geq c_{n_{t_S}}^D\Big|t_S&\Big)\leq \sum_{n>N}\left(N+1\right)^{\frac{n}{N}m}\exp\left(-nc^D\right)\\
    &= \frac{(N+1)^m\exp(-Nc^D)}{1-\exp\left(-c^D+\frac{m}{N}\log(N+1)\right)}
\end{align*}
since \(n+1\leq\left(N+1\right)^{\frac{n}{N}}\) for \(n>N\) where we also assumed \(c^D>\frac{m}{N}\log(N+1)=\frac{m\log((1-\rho)\frac{c^S}{\kappa}+1)}{(1-\rho)\frac{c^S}{\kappa}}\).

Next, we bound lower bound the nominator in \eqref{equation: arl conditional}, \(E_\infty t_S\). For simplicity, assume \(q_j\)s are the mean operators, then \(S_k\) is the random walk with i.i.d. steps \(\sum_j X_{j,k}-\kappa\) that is bounded by zero below and above by \(c^S\). For any other function \(h\) over \(\mathcal{A}_j\) that does not yield the mean operator, we can modify the tilt parameter used in the proof of Wald's identity to get \(q_j(f_j)=E_{f_j}h(X_j)\) as in the proof in \cite{discrete}. Denote the number of zero crossings before \(t_S\) as \(N_0\). Then,
\begin{align*}
    E_\infty t_S&= E_\infty E_\infty\left(t_S\middle|N_0\right)\\
    &\geq E_\infty N_0.
\end{align*}
Since the steps are i.i.d. and bounded, moments of all order exist and we can apply Wald's identity to \(S_k\). Let \(k^*\) denote the latest upper or lower threshold crossing time after a reset. Then,
\begin{align*}
    P_\infty\left(S_{k^*}\geq c^S\right)&\leq \exp\left(-v^*c^S\right)
\end{align*}
and \(E_\infty t_S\geq\exp(v^*c^S)\). Finally we obtain,
\begin{align*}
    E_\infty t_I&\geq \frac{\exp(v^*c^S)}{2^mN\exp\left(-\frac{(2-\rho)^2\kappa}{2(1-\rho)L^2}c^S\right)+\frac{(N+1)^m\exp(-Nc^D)}{1-\exp\left(-c^D+\frac{m}{N}\log(N+1)\right)}}.
\end{align*}
If \(c^D\geq\frac{(2-\rho)^2\kappa^2}{2(1-\rho)^2L^2}\), the first term in the denominator dominates and \(E_\infty t_I\geq \exp\left(\left(v^*+\frac{2\kappa}{L^2}+o(1)\right)c^S\right)\) as \(c^S\to\infty,\rho\to 0\).
\end{proof}

\subsection{Proof of Lemma \ref{lemma: change point}}
\begin{lemma*}
For any \(f_1,\mathcal{J},t_1\) and \(X_1^{t_1-1}\), the stopping time \(t_I\) satisfies
\begin{align*}
    E_{f_1,\mathcal{J},t_1}\left(\left(t_I-t_1+1\right)^+\middle|X_1^{t_1-1}\right)&\leq E_{f_1,\mathcal{J},1}\left(t_S+t_I\right).
\end{align*}
\end{lemma*}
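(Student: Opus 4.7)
The plan is to exploit the regenerative structure of the two-stage algorithm: whenever the outer CUSUM crosses $c^S$ but the information projection test rejects, the algorithm wipes its memory and restarts from fresh samples. Post-change (i.e.\ for $k \geq t_1$), any such restart is a clean copy of the whole procedure driven by i.i.d.\ draws from $f_1$, which is why I expect the delay after $t_1$ to split into a ``first crossing'' contribution and an independent ``restarted $t_I$'' contribution.

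Concretely, I would first dispose of the trivial case $t_I < t_1$, where the left-hand side is zero. Next, letting $T_S$ denote the first time $k \geq t_1$ with $S_k \geq c^S$, either $D_{T_S} \geq c_{n_{T_S}}^D$ and $t_I = T_S$, or $\tau_{T_S + 1} = T_S + 1$ and the continuation uses only post-change observations $X_{T_S + 1}, X_{T_S + 2}, \dots$, so the residual delay $t_I - T_S$ is stochastically dominated by an independent copy of $t_I$ under $P_{f_1, \mathcal{J}, 1}$. Then, to control $T_S - t_1 + 1$ itself, I would introduce the auxiliary CUSUM $S_k^{(t_1)} = \max_{t_1 \leq l \leq k+1}(k-l+1)(q(\hat{f}_{X_l^k}) - \kappa)$ with crossing time $t_S^{(t_1)} = \inf\{k \geq t_1 : S_k^{(t_1)} \geq c^S\}$; this statistic is a function of $X_{t_1}, X_{t_1+1}, \dots$ only, so $t_S^{(t_1)} - t_1 + 1$ is distributed as $t_S$ under $P_{f_1, \mathcal{J}, 1}$. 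Comparing with the actual $S_k$, which maximizes over the wider window $[\tau_k, k+1]$ whenever $\tau_k \leq t_1$, should yield $T_S \leq t_S^{(t_1)}$. Finally, since both restarted quantities are measurable with respect to $\sigma(X_{t_1}, X_{t_1+1}, \dots)$ and therefore independent of $X_1^{t_1-1}$, conditioning and summing their expectations delivers $E_{f_1, \mathcal{J}, 1}(t_S + t_I)$, as required.

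The step I expect to be the main obstacle is the pathwise comparison $T_S \leq t_S^{(t_1)}$: if the actual algorithm happens to reset at some $k \in (t_1, t_S^{(t_1)}]$, then $\tau_k > t_1$ and the straightforward pointwise inequality $S_k \geq S_k^{(t_1)}$ fails on that segment. I would handle this by noting that any such intermediate reset is itself triggered by an $S$-crossing of $c^S$ occurring after $t_1$, which by the definition of $T_S$ means $T_S$ has already occurred strictly before that reset, so the inequality is preserved globally. Aside from this bookkeeping, the remainder of the argument is a direct adaptation of the single-sensor Lemma~6 of \cite{separating-arxiv}.
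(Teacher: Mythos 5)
Your proposal is correct and follows essentially the same route as the paper's proof: you decompose $t_I$ at the first post-change crossing of $c^S$ (your $T_S$, the paper's $t_S^{(u^*)}$) into that crossing time plus a restarted copy of $t_I$ driven by purely post-change samples, bound the first crossing by a fresh copy of $t_S$ started at $t_1$, and resolve the reset/windowing subtlety exactly as the paper does (any reset at $k+1\geq t_1+1$ presupposes a crossing $S_k\geq c^S$ with $k\geq t_1$, so it cannot precede $T_S$; otherwise $\tau_k\leq t_1$ and the wider maximization window gives $S_k\geq S_k^{(t_1)}$). The concluding independence-from-$X_1^{t_1-1}$ step matches the paper's as well, so no gap remains.
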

\begin{proof}
Let us define \(t_S^{(u)}\) as the \(u\)th time \(S_k\) crosses \(c^S\) and \(t_S^v\) and \(t_I^v\) as the the stopping times applied on the samples starting at \(X_v\). Then, \(t_S^{(0)}=0\) and for all \(u\geq 1\), \(t_S^{(u)}=t_S^{(u-1)}+t_S^{t_S^{(u-1)}+1}\).

We first show that for \(u^*=\inf\left\{u\middle|t_S^{(u)}\geq t_1\right\}\),
\begin{align}
    t_I\leq t_1-1+t_S^{t_1}+t_I^{t_S^{(u^*)}+1}. \label{equation: stopping time bound}
\end{align}

Let \(u\) be an arbitrary nonnegative integer. Without loss of generality, assume \(t_I\geq t_S^{(u)}\). Then, either \(t_I=t_S^{(u)}\) or the algorithm is restarted and \(t_I=t_S^{(u)}+t_I^{t_S^{(u)}+1}\). Therefore, we have for all \(u\), \(t_I\leq t_S^{(u)}+t_I^{t_S^{(u)}+1}\).

Similarly, let \(u\) and \(t_S^{(u)}<k\leq t_S^{(u+1)}\) be arbitrary. If \(\tau_{k+t_S^k-1}>k\), then there exists \(k'\) such that \(k<k'\leq k+t_S^k-1\) and \(\tau_{k+t_S^k-1}=\tau_{k'}=k'\). Thus, \(t_S^{(u+1)}\leq k'\leq k+t_S^k-1\). Else, \(\tau_{k+t_S^k-1}\leq k\)
\begin{align*}
    S_{k+t_S^k-1}&= \max_{\tau_{k+t_S^k-1}\leq i\leq k+t_S^k}Q(i,k)\\
    &\leq \max_{k\leq i\leq k+t_S^k}Q(i,k)\\
    &= S_{t_S^k}^k\geq c^S.
\end{align*}
Thus, \(t_S^{(u+1)}\leq k+t_S^k-1\). Setting \(u+1=u^*,k=t_1\) we get \eqref{equation: stopping time bound}.
\begin{align*}
    t_I&\leq t_S^{(u^*)}+t_I^{t_S^{(u^*)}+1}\\
    &\leq t_1+t_S^{t_1}-1+t_I^{t_S^{(u^*)}+1}.
\end{align*}
Then, for any \(f_1,\mathcal{J},t_1\) and \(X_1^{t_1-1}\),
\begin{align*}
    \left(t_I-t_1+1\right)^+&\leq \left(t_1+t_S^{t_1}-1+t_I^{t_S^{(u^*)}+1}-t_1+1\right)^+\\
    &= t_S^{t_1}+t_I^{t_S^{(u^*)}+1}
 \end{align*}
and
\begin{align*}
    E_{f_1,\mathcal{J},t_1}\left(\left(t_I-t_1+1\right)^+\middle|X_1^{t_1-1}\right)&\leq E_{f_1,\mathcal{J},t_1}\left(t_S^{t_1}+t_I^{t_S^{(u^*)}+1}\middle|X_1^{t_1-1}\right)\\
    &= E_{f_1,\mathcal{J},t_1}\left(t_S^{t_1}+t_I^{t_S^{(u^*)}+1}\right)\\
    &= E_{f_1,\mathcal{J},1}\left(t_S+t_I\right)
\end{align*}
since \(t_S^{(u^*)}\geq t_1\).
\end{proof}

\subsection{Proof of Lemma \ref{lemma: add}}
\begin{lemma*}
For any \(f_1,\mathcal{J}\) and \(\rho>0\), if \(t_1=1\) and \(c_n^D=0\) for \(n\leq(1+\rho)\frac{c^S}{\underline{q}}\), then
\begin{align*}
    E_{f_1,\mathcal{J},1}t_I&\leq \frac{c^S}{\underline{q}}
\end{align*}
as \(c^S\to\infty\) and \(\rho\to 0\).
\end{lemma*}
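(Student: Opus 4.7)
The plan is to combine Lemma \ref{lemma: conditional} with a law-of-large-numbers argument for the post-change drift of $S_k$. Applying Lemma \ref{lemma: conditional} with $t_1=1$ gives
$$E_{f_1,\mathcal{J},1} t_I = \frac{E_{f_1,\mathcal{J},1} t_S}{E_{f_1,\mathcal{J},1}\bigl(P_{f_1,\mathcal{J},1}(D_{t_S} \geq c_{n_{t_S}}^D \mid t_S)\bigr)},$$
so it is enough to bound the numerator from above by $(1+o(1))c^S/\underline{q}$ and the denominator from below by $1-o(1)$.

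For the numerator, I would observe that no reset can occur before $t_S$, so $\tau_k=1$ for $k \leq t_S$ and in particular $S_k \geq k\,(q(\hat{f}_{X_1^k})-\kappa)$ via the $l=1$ term in the maximum defining $S_k$. Under $P_{f_1,\mathcal{J},1}$, the samples $X_k$ are i.i.d.\ with distribution $f_1$ and a direct computation using $q(f_{j,0})=0$ and $q_j(f_{j,1}) \geq \underline{q}_j$ for $j \in \mathcal{J}$ gives $q(f_1) - \kappa = \sum_{j\in\mathcal{J}} q_{j,1} - \kappa \geq \underline{q}$. Setting $k^* = (1+\rho)c^S/\underline{q}$, the Lipschitz continuity of $q$ combined with the $l_1$ concentration of empirical distributions \cite{inequalities} (the same tool used in the proof of Theorem \ref{theorem: arl}) yields an exponential tail of the form $P_{f_1,\mathcal{J},1}(t_S > k^*) \leq P_{f_1,\mathcal{J},1}(S_{k^*} < c^S) \leq 2^m \exp\bigl(-k^*\rho^2\underline{q}^2/(2(1+\rho)^2 L^2)\bigr)$. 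Expressing $E_{f_1,\mathcal{J},1} t_S = \sum_{k\geq 0} P_{f_1,\mathcal{J},1}(t_S > k)$ and splitting the sum at $k^*$ then yields $E_{f_1,\mathcal{J},1} t_S \leq k^* + O(1)$.

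For the denominator, I would set $A = \{t_S \leq k^*\}$ and note that $n_{t_S} = t_S - i_{t_S} + 1 \leq t_S$, so on $A$ one has $n_{t_S} \leq (1+\rho)c^S/\underline{q}$ and the hypothesis on $c_n^D$ forces $c_{n_{t_S}}^D = 0$. Since $D_{t_S}$ is a KL divergence and therefore nonnegative, $D_{t_S} \geq c_{n_{t_S}}^D$ holds automatically on $A$, so the denominator is at least $P_{f_1,\mathcal{J},1}(A) = 1-o(1)$. Feeding both estimates back into Lemma \ref{lemma: conditional} delivers $E_{f_1,\mathcal{J},1} t_I \leq c^S/\underline{q}$ in the regime $c^S \to \infty$, $\rho \to 0$. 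I expect the main technical obstacle to be the concentration step: one needs an exponential tail in $c^S$ rather than merely a polynomial one, so that the contribution from $\{t_S > k^*\}$ is of strictly lower order than $k^*$, and the $2^m$ alphabet factor and the $(1+\rho)^2$ in the exponent must be tracked uniformly as $\rho \to 0$ so that the final limiting ratio is exactly $c^S/\underline{q}$.
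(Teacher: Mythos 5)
Your proposal is correct and follows essentially the same route as the paper: apply Lemma \ref{lemma: conditional} at $t_1=1$, bound $E_{f_1,\mathcal{J},1}t_S$ by $(1+\rho)c^S/\underline{q}$ plus a geometrically summable exponential tail obtained from Lipschitz continuity of $q$ and the $l_1$ concentration bound, and observe that on $\{t_S\leq(1+\rho)c^S/\underline{q}\}$ the threshold $c_{n_{t_S}}^D$ is zero so the suppression stage never fires, making the denominator $1-o(1)$. The only cosmetic difference is that you reuse the numerator's tail bound for the denominator rather than recomputing $P(S_T\geq c^S)$ directly as the paper does, which is an equivalent step.
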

\begin{proof}
From Lemma \ref{lemma: conditional},
\begin{align}
    E_{f_1,\mathcal{J},1} t_I&= \frac{E_{f_1,\mathcal{J},1} t_S}{E_{f_1,\mathcal{J},1}\left(P_{f_1,\mathcal{J},1}\left(D_{t_S}\geq c_n^D\middle|t_S\right)\right)}. \label{equation: wadd fraction}
\end{align}
For the nominator of \eqref{equation: wadd fraction}, we consider the probability of \(t_S>t\) for each \(t\).
\begin{align}
    &P_{f_1,\mathcal{J},1}\left(t_S>t\right) \nonumber\\
    &= P_{f_1,\mathcal{J},1}\left(\max_{1\leq k\leq t}S_k<c^S\right) \nonumber\\
    &\leq P_{f_1,\mathcal{J},1}\left(t\left(q\left(\hat{f}_{X_1^t}\right)-\kappa\right)<c^S\right) \nonumber\\
    &\leq P_{f_1,\mathcal{J},1}\left(q\left(\hat{f}_{X_1^t}\right)-\sum_{j\in\mathcal{J}}q_{j,1}<\frac{c^S}{t}+\kappa-\sum_{j\in\mathcal{J}}q_{j,1}\right) \nonumber\\
    &\leq P_{f_1,\mathcal{J},1}\left(\left|q\left(\hat{f}_{X_1^t}\right)-\sum_{j\in\mathcal{J}}q_{j,1}\right|>\sum_{j\in\mathcal{J}}q_{j,1}-\kappa-\frac{c^S}{t}\right) \nonumber\\
    &\leq P_{f_1,\mathcal{J},1}\left(L\norm{\hat{f}_{X_1^t}-f_1}_1>\sum_{j\in\mathcal{J}}q_{j,1}-\kappa-\frac{c^S}{t}\right) \nonumber\\
    &= P_{f_1,\mathcal{J},1}\left(\norm{\hat{f}_{X_1^t}-f_1}_1>\frac{\sum_{j\in\mathcal{J}}q_{j,1}-\kappa-\frac{c^S}{t}}{L}\right) \nonumber\\
    &\leq 2^m\exp\left(-\frac{t}{2}\left(\frac{\sum_{j\in\mathcal{J}}q_{j,1}-\kappa-\frac{c^S}{t}}{L}\right)^2\right) \label{equation: wadd nominator}
\end{align}
where, in \eqref{equation: wadd nominator}, we have used the inequality in \cite{inequalities}. Then, for \(t>T=(1+\rho)\frac{c^S}{\underline{q}}\),
\begin{align*}
    \sum_{j\in\mathcal{J}}q_{j,1}-\kappa-\frac{c^S}{t}&\geq \sum_{j\in\mathcal{J}}q_{j,1}-\kappa-\frac{\underline{q}}{1+\rho}\\
    &\geq \frac{\rho\underline{q}}{1+\rho}
\end{align*}
since for all \(j\), \(q_{j,1}\geq\underline{q}_j>0\) and \(\mathcal{J}\neq\emptyset\). Thus,
\begin{align*}
    E_{f_1,\mathcal{J},1}t_S&= \sum_t P_{f_1,\mathcal{J},1}\left(t_S>t\right)\\
    &\leq T+\sum_{t>T} 2^m\exp\left(-t\frac{\rho^2\underline{q}^2}{2(1+\rho)^2L^2}\right)\\
    &= T+2^m\frac{\exp\left(-\frac{\rho^2\underline{q}c^S}{2(1+\rho)L^2}\right)}{1-\exp\left(-\frac{\rho^2\underline{q}^2}{2(1+\rho)^2L^2}\right)}
\end{align*}
Next, we bound the denominator of \eqref{equation: wadd fraction} using the fact that \(t_S\leq T\) with high probability. We also use the fact that whenever \(t_S\leq T\), \(n_{t_S}\leq T\) and \(D_{t_S}\geq c_{n_{t_S}}^D=0\) thus, \(t_I=t_S\). Then,
\begin{align*}
    &E_{f_1,\mathcal{J},1}P_{f_1,\mathcal{J},1}\left(t_I= t_S\middle|t_S\right)\\
    &\geq E_{f_1,\mathcal{J},1}\mathbb{1}\left(t_S\leq T\right)\\
    &\geq P_{f_1,\mathcal{J},1}\left(t_S\leq T\right)\\
    &= P_{f_1,\mathcal{J},1}\left(\max_{k\leq T}S_k\geq c^S\right)\\
    &\geq P_{f_1,\mathcal{J},1}\left(S_{ T}\geq c^S\right)\\
    &= P_{f_1,\mathcal{J},1}\left(q\left(\hat{f}_{X_1^T}\right)\geq\frac{\underline{q}}{1+\rho}\right)\\
    &= 1-P_{f_1,\mathcal{J},1}\left(\left|q\left(\hat{f}_{X_1^T}\right)-q\left(f_1\right)\right|>\sum_{j\in\mathcal{J}}q_{j,1}-\kappa-\frac{\underline{q}}{1+\rho}\right)\\
    &\geq 1-P_{f_1,\mathcal{J},1}\left(\norm{\hat{f}_{X_1^T}-f_1}_1>\frac{\min_j\underline{q}_j-\kappa-\frac{\underline{q}}{1+\rho}}{L}\right)\\
    &\geq 1-2^m\exp\left(-\frac{T}{2}\left(\frac{\rho\underline{q}}{(1+\rho)L}\right)^2\right)\\
    &\geq 1-2^m\exp\left(-\frac{\rho^2\underline{q}}{2(1+\rho)L^2}c^S\right)
\end{align*}
Finally,
\begin{align*}
    E_{f_1,\mathcal{J},1}t_I&\leq \frac{(1+\rho)\frac{c^S}{\underline{q}}+2^m\frac{\exp\left(-\frac{\rho^2\underline{q}}{2(1+\rho)L^2}c^S\right)}{1-\exp\left(-\frac{\rho^2\underline{q}^2}{2(1+\rho)^2L^2}\right)}}{1-2^m\exp\left(-\frac{\rho^2\underline{q}}{2(1+\rho)L^2}c^S\right)}
\end{align*}
and \(E_{f_1,\mathcal{J},1}t_I\leq\frac{c^S}{\underline{q}}\) as \(c^S\to\infty\) and \(\rho\to 0\).
\end{proof}

\bibliographystyle{IEEEtran}
\bibliography{bibliography.bib}

\end{document}